\documentclass[preprint,amsmath]{revtex4}
\usepackage{amssymb}
\usepackage{graphicx,epsfig,subfigure,dsfont,amssymb,amsthm,amsfonts,amsbsy,mathrsfs,amscd,appendix}
\def\qed{\leavevmode\unskip\penalty9999 \hbox{}\nobreak\hfill
     \quad\hbox{\leavevmode  \hbox to.77778em{%
               \hfil\vrule   \vbox to.675em%
               {\hrule width.6em\vfil\hrule}\vrule\hfil}}
     \par\vskip3pt}
\usepackage{tikz}  
\usepackage{pgf}   

\usepackage{graphicx}
\usepackage{subfigure}

\newtheorem{theorem}{Theorem}
\newtheorem{corollary}{Corollary}

\input amssym.def



\begin{document}

\title{The Average Quantum Coherence of Pure State Decomposition}

\smallskip
\author{Ming-Jing Zhao$^1$}
\author{Teng Ma$^{2}$ }
\author{Rajesh Pereira$^3$ }

\affiliation{
$^1$School of Science,
Beijing Information Science and Technology University, Beijing, 100192, China\\
$^2$ Beijing Academy of Quantum Information Sciences, Beijing 100193, China\\
$^3$ Department of Mathematics and Statistics, University of Guelph, Guelph, N1G2W1, Canada
}

\begin{abstract}
We study the {average} quantum coherence over the pure state decompositions of a mixed quantum state.
An upper bound of the average quantum coherence is provided and sufficient conditions for the saturation of the upper bound are shown. These sufficient conditions always hold for two and three dimensional systems. This provides a tool to estimate the average coherence experimentally by measuring only the diagonal elements, which remarkably requires less measurements compared with state tomography.  We then describe the pure state decompositions of qubit state in Bloch sphere geometrically. For any given qubit state, the optimal pure state decomposition achieving the maximal average quantum coherence as well as three other pure state decompositions are shown in the Bloch sphere.
The order relations among their average quantum coherence are invariant for any coherence measure.
The results presented in this paper are universal and suitable for all coherence measures.
\end{abstract}

\maketitle

\section{Introduction}

Quantum coherence, {an important feature in quantum world,} is a valuable resource in many areas of quantum computation and quantum communication processing, such as  quantum algorithms, quantum metrology, quantum channel discrimination, quantum thermodynamics, etc. \cite{A. Streltsov-rev,M. Hu,I. Marvian,M. Hillery}. In this context, the formulation of quantifying quantum coherence is initiated and some coherence measures such as the $l_1$ norm of coherence \cite{T. Baumgratz}, the relative entropy of coherence \cite{T. Baumgratz}, intrinsic randomness of coherence \cite{X. Yuan}, coherence concurrence \cite{X. Qi}, distillable coherence \cite{A. Winter}, robustness of coherence \cite{C. Napoli}, geometric coherence \cite{A. Streltsov}, coherence number \cite{S. Chin}, are proposed from different aspects to characterize quantum coherence. These coherence measures are not separate but connected with each other quantitatively \cite{S. Rana16, S. Rana17}.

For pure states, the quantification of coherence is well understood and the restrictions of all coherence measures to pure states are
identical to some real symmetric concave functions mathematically \cite{S. Du,H. Zhu}. However, for mixed states which have infinitely different pure state decompositions, the quantification is more complicated and relatively difficult to study. Technically most quantifiers focus on the density matrices of quantum states. But, in fact, the quantum ensembles store more information than  is available solely from the density matrix.
In this paper we shall study the quantum coherence from the aspect of the average coherence with respect to the pure state decompositions to reveal the features of the coherence in mixed states.

Operationally, the average quantum coherence can be interpreted in the following way.
Suppose Alice holds a state $\rho^A$ with coherence $C(\rho^A)$. Bob holds another part of the purified state of $\rho^A$. The joint state between Alice and Bob is $\sum_k p_k |\psi_k\rangle_A \otimes |k\rangle_B$ with $\rho^A=\sum_k p_k |\psi_k\rangle\langle \psi_k|$. Bob performs local measurements $\{|k\rangle\}$ and informs Alice of the measurement outcomes by classical communication. Alice's quantum state will be in a pure state ensemble $\{ p_k,\  |\psi_k\rangle \}$ with average coherence $\sum_k p_k C(|\psi_k\rangle\langle \psi_k|)$. In this process the coherence in Alice can be increased from $C(\rho^A)$ to the average coherence $\sum_k p_k C(|\psi_k\rangle\langle \psi_k|)$ because of the convexity of the coherence measure. This is called the assisted coherence distillation. For the average coherence, its maximum
which quantifies a one way coherence
distillation rate is called the coherence of assistance \cite{E. Chitambar,zhao l1max} and its minimum is identified as a coherence measure \cite{H. Zhu,S. Du}.

Since the average coherence depends on the pure state decompositions of mixed states, the order relation of the average coherence with respect to different pure state decompositions may change with different coherence measures generally. Let us consider quantum state $\rho=\left(
\begin{array}{ccccccc}
2/3 & 1/3\\
1/3 & 1/3
\end{array}
\right)$ with two pure state decompositions. One pure state decomposition is $\mathfrak{D}_1=\{p_i, |\psi_i\rangle \}_{i=1}^2$, $|\psi_1\rangle=\frac{1}{\sqrt{2}}(|0\rangle+|1\rangle)$ with probability $p_1=\frac{2}{3}$, $|\psi_2\rangle=|0\rangle$ with probability $p_2=\frac{1}{3}$. The second pure state decomposition is the spectral decomposition $\mathfrak{D}_2=\{p_i^\prime, |\psi_i^\prime \rangle \}_{i=1}^2$, $|\psi_1^\prime\rangle=(-1 + \sqrt{5})/\sqrt{10 - 2 \sqrt{5}}|0\rangle+ \sqrt{2/(5 - \sqrt{5})}|1\rangle$ with probability $p_1^\prime=\frac{3 + \sqrt{5}}{6}$, $|\psi_2^\prime\rangle=-(1 + \sqrt{5})/\sqrt{10 + 2\sqrt{5})}|0\rangle+ \sqrt{2/(5 + \sqrt{5})}|1\rangle$ with probability $p_2^\prime=\frac{3 - \sqrt{5}}{6}$.
For these two pure state decompositions $\mathfrak{D}_1$ and $\mathfrak{D}_2$, the first average coherence is larger than the second for the relative entropy of coherence measure while the reverse is true for the $l_1$ norm of coherence measure. This inconsistency is expected because
each coherence measure just describes one aspect of the coherence.
Surprisingly, in a single qubit system, we find the order relation to be invariant of the average quantum coherence for some pure state decompositions.
The order relation of the average quantum coherence remains the same for all coherence measures. The average quantum coherence and the corresponding pure state decompositions are intrinsic features hidden in the pure state decompositions of mixed states.

In this paper, we study the average quantum coherence over the pure state decomposition for mixed quantum state. In sec.\ II, an upper bound for the average quantum coherence is derived.  This upper bound is only a function of the diagonal entries of the quantum state. Two sufficient conditions for the saturation of the upper bound are shown which hold for two and three dimensional systems. One surprising result is the pure state decomposition attaining the upper bound is deterministic and independent of the explicit coherence measure.
In sec.\ III, we study the pure state decompositions of a qubit state in Bloch sphere geometrically. For any given qubit state, the optimal pure state decomposition reaching the maximal average quantum coherence, the spectral decomposition as well as other two pure state decompositions are shown in the Bloch sphere. Their average quantum coherence are compared and the order relations are invariant for all coherence measures. In sec.\ IV, we conclude these results with a summary.

\section{The upper bound of the average quantum coherence}

In this section, we shall evaluate the average quantum coherence over pure state decompositions for a given mixed state.
First let $\Omega=\{(x_1, x_2, \cdots, x_n)^T| \sum_i x_i=1,\ x_i\geq 0\}$ be the probability simplex. A real-valued function $f$ is called a concave function on the probability simplex $\Omega$, if
\begin{equation}\label{eq concave}
f(\lambda \textbf{x}_1+(1-\lambda)\textbf{x}_2)) \geq \lambda f( \textbf{x}_1)+(1-\lambda) f(\textbf{x}_2)
\end{equation}
for any $\textbf{x}_1$ and $\textbf{x}_2$ in $\Omega$ and any $0\leq \lambda \leq 1$. Especially, $f$ is strictly concave if inequality (\ref{eq concave}) is strict inequality
whenever $\textbf{x}_1\neq \textbf{x}_2$ and $0< \lambda < 1$.
Let ${\mathfrak{{F}}}_{{sc}}=\{f\}$ be the set of real functions on the probability simplex satisfying the following three conditions: (i) $f((1,0,\cdots,0)^T)=0$; (ii) $f$ is symmetric, which means $f$ is invariant under any permutation transformation, $f( \textbf{x})=f(P_{\pi} \textbf{x})$ with $P_{\pi}$ any permutation matrix. (iii) $f$ is concave.

For any $f\in {\mathfrak{{F}}}_{{sc}}\setminus \{0\}$, we can obtain a coherence measure for pure state $|\psi\rangle=\sum_{i=0}^{n-1} \psi_i |i\rangle$ under the reference basis $\{|i\rangle\}_{i=0}^{n-1}$ as follows:
\begin{equation}\label{cf}
C_f(|\psi\rangle)=f(|\psi_0|^2, |\psi_1|^2, \cdots, |\psi_{n-1}|^2).
\end{equation}
The vector $(|\psi_0|^2, |\psi_1|^2, \cdots, |\psi_{n-1}|^2)^T$ is called the coherence vector of a pure state $|\psi\rangle$ \cite{S. Du,H. Zhu}. Conversely,
the restriction of any coherence measure $C$ to pure states is
identical to $C_f$ for certain $f\in {\mathfrak{{F}}}_{{sc}}\setminus \{0\}$ \cite{S. Du,H. Zhu}.

For any coherence measure $C$ and any given mixed state $\rho$, the average quantum coherence with respect to the pure state decomposition $\mathfrak{D}=\{p_k, |\psi_k\rangle\}$ of $\rho$ is
\begin{equation}
{\bar C}(\mathfrak{D})=\sum_k p_k {C_f}(|\psi_k\rangle).
\end{equation}
Generally, the average quantum coherence of $\rho$ changes with different pure state decompositions.

\begin{theorem}\label{th upper bound}
For any mixed quantum state $\rho=\sum_{i,j} \rho_{ij} |i\rangle\langle j|$, its average quantum coherence is bounded from above by ${\bar C} \leq f(\rho_{11}, \rho_{22}, \cdots, \rho_{nn})$. The equality holds if there is a pure state decomposition $\{p_k,\  |\psi_k\rangle\}$ of $\rho$ such that $\Delta(|\psi_k\rangle\langle\psi_k|)=\Delta(\rho)$ for all $k$. Here $\Delta(\rho)$ denotes the diagonal matrix having the same diagonal entries as $\rho$.
\end{theorem}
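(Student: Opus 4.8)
The plan is to prove the inequality first and then analyze the equality case. For the upper bound, let $\mathfrak{D}=\{p_k,|\psi_k\rangle\}$ be an arbitrary pure state decomposition of $\rho$. By definition ${\bar C}(\mathfrak{D})=\sum_k p_k C_f(|\psi_k\rangle)=\sum_k p_k f(\mathbf{v}_k)$, where $\mathbf{v}_k=(|\langle 0|\psi_k\rangle|^2,\dots,|\langle n-1|\psi_k\rangle|^2)^T$ is the coherence vector of $|\psi_k\rangle$, which lies in the probability simplex $\Omega$. Since $\rho=\sum_k p_k|\psi_k\rangle\langle\psi_k|$, comparing diagonal entries gives $\sum_k p_k \mathbf{v}_k=(\rho_{11},\rho_{22},\dots,\rho_{nn})^T$. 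Now apply Jensen's inequality, using that $f$ is concave on $\Omega$ (property (iii) of $\mathfrak{F}_{sc}$): $\sum_k p_k f(\mathbf{v}_k)\le f\!\left(\sum_k p_k \mathbf{v}_k\right)=f(\rho_{11},\dots,\rho_{nn})$. Since the decomposition was arbitrary, taking the supremum yields ${\bar C}\le f(\rho_{11},\dots,\rho_{nn})$.

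For the equality case, suppose there exists a decomposition $\{p_k,|\psi_k\rangle\}$ of $\rho$ with $\Delta(|\psi_k\rangle\langle\psi_k|)=\Delta(\rho)$ for every $k$. This says precisely that each coherence vector $\mathbf{v}_k$ equals $(\rho_{11},\dots,\rho_{nn})^T$, so $C_f(|\psi_k\rangle)=f(\rho_{11},\dots,\rho_{nn})$ for all $k$, and hence ${\bar C}(\mathfrak{D})=\sum_k p_k f(\rho_{11},\dots,\rho_{nn})=f(\rho_{11},\dots,\rho_{nn})$. Combined with the upper bound just proved, this decomposition is optimal and the bound is saturated. I would also remark here on the notable feature advertised in the introduction: the value $f(\rho_{11},\dots,\rho_{nn})$ and the saturating decomposition do not depend on the choice of $f$, since the condition $\Delta(|\psi_k\rangle\langle\psi_k|)=\Delta(\rho)$ is purely a statement about diagonal entries.

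The inequality direction is essentially immediate from Jensen, so the only genuine content is the equality condition, and even there the stated sufficient condition is verified by direct substitution. The real work — which is not part of this theorem but of the subsequent results — is establishing \emph{when} such a ``flat-diagonal'' decomposition exists (the two sufficient conditions, and their automatic validity in dimensions two and three); I expect that existence argument, rather than anything in the present statement, to be the main obstacle in the paper. For the theorem as stated, the one point requiring a word of care is that the supremum defining ${\bar C}$ is over all pure state decompositions and we must ensure the coherence vectors genuinely lie in $\Omega$ (they do, being squared moduli of the amplitudes of a unit vector) so that concavity of $f$ on $\Omega$ applies.
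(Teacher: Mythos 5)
Your proposal is correct and follows essentially the same route as the paper's proof: Jensen's inequality applied to the concave function $f$ on the coherence vectors, using that the weighted sum of these vectors equals the diagonal of $\rho$, with the equality case verified by direct substitution when all coherence vectors coincide with $(\rho_{11},\dots,\rho_{nn})^T$. No gaps.
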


\begin{proof}
Suppose $\mathfrak{D}=\{p_k,\  |\psi_k\rangle\}$ is an arbitrary pure state decomposition of $\rho$ with $|\psi_k\rangle=\sum_i \psi_i^{(k)} |i\rangle$,
then $\sum_k p_k|\psi_i^{(k)}|^2=\rho_{ii}$ for all $k$ and $i$.
So
${\bar C}(\mathfrak{D})=\sum_k p_k C_f(|\psi_k\rangle)= \sum_k p_k f(|\psi_0^{(k)}|^2, |\psi_1^{(k)}|^2,\cdots, |\psi_{n-1}^{(k)}|^2)\leq f(\sum_k p_k|\psi_0^{(k)}|^2, \sum_k p_k|\psi_1^{(k)}|^2,\cdots, \sum_k p_k|\psi_{n-1}^{(k)}|^2)=f(\rho_{11}, \rho_{22}, \cdots, \rho_{nn})$, where the inequality is because of the concavity of the function $f$. The equality holds if the vectors $(|\psi_0^{(k)}|^2, |\psi_1^{(k)}|^2,\cdots, |\psi_{n-1}^{(k)}|^2)^T$ are the same for all $k$.
\end{proof}

Theorem \ref{th upper bound} provides an upper bound for the average quantum coherence in terms of the diagonal entries of quantum state. Since $\rho_{ii}$ is the probability corresponding to the outcome of the projective measurement $|i\rangle\langle i|$, so in order to derive this upper bound experimentally, one just needs to perform the projective measurements $\{|i\rangle\langle i|\}_{i=0}^{n-1}$. The required number of the measurements for estimating the average coherence is much less than for full quantum state tomography.

To verify the saturation of the upper bound in  Theorem \ref{th upper bound}, we have employed the dephasing operation $\Delta(\rho)=\sum_i |i\rangle\langle i| \rho |i\rangle\langle i|$. In fact, we also can use correlation matrices (positive semidefinite matrices with all diagonal entries equal to one \cite{CV}) to examine the saturation of the upper bound.
For quantum state $\rho$, let $CM(\rho)=\Delta(\rho)^{-\frac{1}{2}}\rho \Delta(\rho)^{-\frac{1}{2}}$, then $CM(\rho)$ is a correlation matrix. So whether the average coherence of $\rho$ attains the upper bound $f(\rho_{11}, \rho_{22}, \cdots, \rho_{nn})$ is equivalent to whether $CM(\rho)$ can be decomposed as the convex combination of rank one correlation matrices.

\begin{theorem}\label{th saturation}
The average coherence ${\bar C}$ reaches the upper bound $f(\rho_{11}, \rho_{22}, \cdots, \rho_{nn})$ if $CM(\rho)$ is the convex combination of rank one correlation matrices.
\end{theorem}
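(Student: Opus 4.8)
The plan is to turn a decomposition of $CM(\rho)$ into rank one correlation matrices into a pure state decomposition of $\rho$ of exactly the special form required in Theorem~\ref{th upper bound}, and then quote that theorem. First I would recall the structure of rank one correlation matrices: a rank one positive semidefinite matrix is $|v\rangle\langle v|$, and the constraint that all diagonal entries equal one forces $|\langle i|v\rangle|=1$ for every $i$, i.e.\ $|v\rangle$ is a ``unimodular'' vector. So the hypothesis $CM(\rho)=\sum_k q_k\sigma_k$ reads $CM(\rho)=\sum_k q_k|v_k\rangle\langle v_k|$ with $q_k\ge 0$, $\sum_k q_k=1$, and each $|v_k\rangle$ of this unimodular form.

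Next I would define $|\phi_k\rangle=\Delta(\rho)^{1/2}|v_k\rangle$ for each $k$. A one-line computation gives $\langle\phi_k|\phi_k\rangle=\langle v_k|\Delta(\rho)|v_k\rangle=\sum_i\rho_{ii}|\langle i|v_k\rangle|^2=\sum_i\rho_{ii}=1$, so each $|\phi_k\rangle$ is a genuine normalized pure state; moreover $\langle i|\phi_k\rangle=\sqrt{\rho_{ii}}\,\langle i|v_k\rangle$, hence $|\langle i|\phi_k\rangle|^2=\rho_{ii}$, which is precisely $\Delta(|\phi_k\rangle\langle\phi_k|)=\Delta(\rho)$ for every $k$. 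Conjugating the decomposition of $CM(\rho)$ by $\Delta(\rho)^{1/2}$ on both sides yields $\rho=\Delta(\rho)^{1/2}CM(\rho)\Delta(\rho)^{1/2}=\sum_k q_k|\phi_k\rangle\langle\phi_k|$, so $\{q_k,|\phi_k\rangle\}$ is a legitimate pure state decomposition of $\rho$.

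Finally, since this decomposition satisfies $\Delta(|\phi_k\rangle\langle\phi_k|)=\Delta(\rho)$ for all $k$, the saturation criterion in Theorem~\ref{th upper bound} applies directly and gives ${\bar C}=f(\rho_{11},\rho_{22},\dots,\rho_{nn})$, which is the claim.

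The argument is essentially a change of variables by $\Delta(\rho)^{1/2}$, so no step is genuinely difficult; the only point that needs care is the standing assumption that $\Delta(\rho)$ is invertible, so that $CM(\rho)$ and $\Delta(\rho)^{-1/2}$ are defined. If some $\rho_{ii}=0$, then positivity forces the whole $i$-th row and column of $\rho$ to vanish, and one restricts $\rho$ to the subspace spanned by those $|i\rangle$ with $\rho_{ii}>0$, on which the above reasoning goes through verbatim; I would remark on this reduction rather than dwell on it.
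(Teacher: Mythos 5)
Your proof is correct and follows essentially the same route as the paper's: decompose $CM(\rho)$ into rank one correlation matrices $|v_k\rangle\langle v_k|$, conjugate by $\Delta(\rho)^{1/2}$ to obtain a pure state decomposition with $\Delta(|\phi_k\rangle\langle\phi_k|)=\Delta(\rho)$, and invoke Theorem~\ref{th upper bound}. Your added observations (unimodularity of the $|v_k\rangle$, the normalization check, and the reduction when some $\rho_{ii}=0$) are correct details that the paper leaves implicit.
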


\begin{proof}
If $CM(\rho)$ is the convex combination of rank one correlation matrices, $CM(\rho)=\sum_k p_k A_k$ with all $A_k$ rank one, $A_k=|v_k\rangle\langle v_k|$, with unnormalized vectors $|v_k\rangle$ for all $k$ and $\sum_k p_k =1$, then there exists a pure state decomposition $\{p_k,\ |\psi_k\rangle\}$ with $|\psi_k\rangle=\Delta(\rho)^{\frac{1}{2}} |v_k\rangle$ and $\rho=\sum_k p_k  |\psi_k\rangle\langle\psi_k| $. Furthermore $\Delta(|\psi_k\rangle\langle\psi_k|)=\Delta(\rho)^{\frac{1}{2}} \Delta(|v_k\rangle\langle v_k|) \Delta(\rho)^{\frac{1}{2}} =\Delta(\rho)$. By Theorem \ref{th upper bound}, $\{p_k,\ |\psi_k\rangle\}$ is a pure state decomposition of $\rho$ such that ${\bar C}$ reaches the upper bound $f(\rho_{11}, \rho_{22}, \cdots, \rho_{nn})$.
\end{proof}

Since the set of correlation matrices is convex and its extreme points are rank one in two and three dimensional systems \cite{CV}, therefore all correlation matrices $CM(\rho)$ related to the quantum state $\rho$ in two and three dimensional systems can be decomposed as the convex combination of rank one correlation matrices. Combining this fact with Theorem \ref{th saturation}, we obtain the following result.

\begin{corollary}
In two and three dimensional systems, the maximum of ${\bar C}$ is $f(\rho_{11}, \rho_{22}, \cdots, \rho_{nn})$ for any quantum state $\rho$.
\end{corollary}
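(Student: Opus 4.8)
The plan is to read the corollary off from Theorems~\ref{th upper bound} and \ref{th saturation} together with one structural fact about the convex body of correlation matrices. Theorem~\ref{th upper bound} already gives ${\bar C}(\mathfrak D)\le f(\rho_{11},\dots,\rho_{nn})$ for \emph{every} pure state decomposition $\mathfrak D$ of $\rho$, so the maximum of ${\bar C}$ over decompositions is at most $f(\rho_{11},\dots,\rho_{nn})$. Hence the entire task is to exhibit a decomposition attaining this value, and by Theorem~\ref{th saturation} it is enough to show that $CM(\rho)=\Delta(\rho)^{-1/2}\rho\,\Delta(\rho)^{-1/2}$ can be written as a convex combination of rank one correlation matrices.

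To do this I would first dispose of the degenerate case: if $\rho_{ii}=0$ for some $i$, positivity of $\rho$ forces the whole $i$-th row and column to vanish, so $\rho$ — and every pure state in any of its decompositions — is supported on the complementary coordinate subspace; since $f$ is symmetric with $f((1,0,\dots,0)^T)=0$, coordinate $i$ contributes nothing to either side of the claimed equality and one may pass to that smaller subspace, so without loss of generality $\Delta(\rho)\succ0$ and $CM(\rho)$ is a genuine correlation matrix (positive semidefinite, unit diagonal). Now the set of $n\times n$ correlation matrices is a nonempty compact convex subset of the real $n^2$-dimensional space of Hermitian matrices, so every element — in particular $CM(\rho)$ — is a finite convex combination of its extreme points (the Minkowski--Carath\'eodory theorem). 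The input I would import from \cite{CV} is that for $n=2$ and $n=3$ all extreme points of this set have rank one. Therefore $CM(\rho)=\sum_k p_k |v_k\rangle\langle v_k|$ with each $|v_k\rangle\langle v_k|$ a rank one correlation matrix, and Theorem~\ref{th saturation} produces a pure state decomposition of $\rho$ on which ${\bar C}=f(\rho_{11},\dots,\rho_{nn})$; together with the upper bound this is the claimed maximum.

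The step carrying essentially all the weight — though here it is quoted rather than proved — is the rank one characterization of the extreme points of the correlation-matrix body, and this is precisely where the restriction to dimensions two and three enters: for $n\ge 4$ this body has extreme points of rank greater than one and the argument genuinely fails. Everything else is routine: the upper bound is immediate from Theorem~\ref{th upper bound}, the passage to decompositions of $CM(\rho)$ is exactly the statement of Theorem~\ref{th saturation}, and the only place demanding a little care is the degenerate-diagonal reduction above.
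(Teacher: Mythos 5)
Your proof is correct and follows essentially the same route as the paper: the upper bound from Theorem~\ref{th upper bound}, saturation via Theorem~\ref{th saturation}, and the fact from the cited reference that the extreme points of the set of correlation matrices are rank one in dimensions two and three. Your explicit handling of the degenerate case $\rho_{ii}=0$ (where $\Delta(\rho)^{-1/2}$ is undefined) is a small but worthwhile addition of rigor that the paper passes over silently.
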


Now we show the optimal pure state decompositions of mixed state reaching the upper bound in Theorem \ref{th upper bound} by two explicit examples.

{\bf Example 1.}
For any qubit state $\rho=\sum_{i,j=0}^1 \rho_{ij} |i\rangle \langle j|$, the corresponding correlation matrix is
$CM(\rho)=\left(
\begin{array}{ccccccc}
1 & \rho_{12}/\sqrt{\rho_{11}\rho_{22}}\\
\rho_{21}/\sqrt{\rho_{11}\rho_{22}} & 1
\end{array}
\right)$, which can be decomposed as $CM(\rho)=p_1|\phi_1\rangle\langle\phi_1| +p_2|\phi_2\rangle\langle\phi_2|$ with $|\phi_1\rangle=|0\rangle+e^{{\rm -i arg}({\rho_{12}})}|1\rangle$, $p_1=\frac{1}{2}(1+ |\rho_{12}|/\sqrt{\rho_{11}\rho_{22}})$ and $|\phi_2\rangle=|0\rangle-e^{{\rm -i arg}({\rho_{12}})} |1\rangle$, $p_2=\frac{1}{2}(1- |\rho_{12}|/\sqrt{\rho_{11}\rho_{22}})$, for nonzero $\rho_{11}$ and $\rho_{22}$, ${\rm arg}(\rho_{12} )$ is the argument of $\rho_{12}$,
and the roman i denotes the imaginary unit.
Then $\rho$ can be decomposed as the pure state decomposition $\mathfrak{D}=\{p_k,\ |\psi_k\rangle\}_{k=0,1}$ whose average coherence attains the maximum, ${\bar C}(\mathfrak{D})=f(\rho_{11},\ \rho_{22})$, where
$|\psi_1\rangle=\sqrt{\rho_{11}}|0\rangle+e^{{\rm- i arg}({\rho_{12}})} \sqrt{\rho_{22}} |1\rangle$, $|\psi_2\rangle=\sqrt{\rho_{11}}|0\rangle-e^{{\rm- i arg}({\rho_{12}})}  \sqrt{\rho_{22}} |1\rangle$, with the same weights $p_1$ and $p_2$ as $CM(\rho)$.

{\bf Example 2.}
In high dimensional systems, for the incoherent state $\rho=\sum_{i=0}^{n-1} \rho_{ii} |i\rangle\langle i|$, since $CM(\rho)=I=\frac{1}{n}\sum_k |\phi_k\rangle \langle \phi_k|$ with $|\phi_k\rangle= \sum_{j=0}^{n-1} e^{2\pi {\rm i} (k-1)j/n}|j\rangle$ for $k=1,2,\cdots,n$, so $\mathfrak{D}=\{p_k,\ |\psi_k\rangle\}$ with $|\psi_k\rangle=\sum_{j=0}^{n-1} e^{2\pi {\rm i} (k-1)j/n}\sqrt{\rho_{jj}}|j\rangle$, and $p_k=\frac{1}{n}$ for $k=0,1,\cdots, n-1$ is an optimal pure state decomposition of $\rho$ reaching the maximal average coherence, ${\bar C}(\mathfrak{D})=f(\rho_{11}, \rho_{22}, \cdots, \rho_{nn})$.

Especially, if the function $f$ is strictly concave, then one can check the proof of Theorem \ref{th upper bound} and get the equality holds if and only if there exists a pure state decomposition such that all the coherence vectors are the same for all pure states. This makes
the condition in Theorem \ref{th upper bound} necessary and sufficient for strictly concave function. Because the conditions in Theorems \ref{th upper bound} and \ref{th saturation} are equivalent, the condition in Theorem \ref{th saturation} becomes also necessary and sufficient for strictly concave function.
In fact, most of the symmetric concave functions related to the coherence measures in the literature are strictly concave.
For example,
for the relative entropy of coherence $C_r(\rho)$ \cite{T. Baumgratz} and intrinsic randomness of coherence $C_R(\rho)$ \cite{X. Yuan}, the corresponding symmetric concave function is $f(|\psi_0|^2, |\psi_1|^2,\cdots, |\psi_{n-1}|^2)=H (|\psi_0|^2, |\psi_1|^2,\cdots, |\psi_{n-1}|^2)$ where $H$ is the entropy function. For the $l_1$ norm of coherence $C_{l_1}$ \cite{T. Baumgratz} and coherence concurrence $C_{con}$ \cite{X. Qi}, the corresponding symmetric concave function is $f(|\psi_0|^2, |\psi_1|^2,\cdots, |\psi_{n-1}|^2)=\sum_{i \neq j} |\psi_i \psi_j|$. For the fidelity based measure of coherence $C_F(\rho)$ \cite{C. L. Liu2017,Streltsov2010}, the corresponding symmetric concave function is $f(|\psi_0|^2, |\psi_1|^2,\cdots, |\psi_{n-1}|^2)=\sqrt{1-\max \{|\psi_0|^2, |\psi_1|^2,\cdots, |\psi_{n-1}|^2\}}$.
These functions are all strictly concave.
In this case, the average coherence reaches its upper bound $ f(\rho_{11}, \rho_{22}, \cdots, \rho_{nn})$ if and only if
there is a pure state decomposition $\{p_k,\  |\psi_k\rangle\}$ of $\rho$ such that $\Delta(|\psi_k\rangle\langle\psi_k|)=\Delta(\rho)$ for all $k$.

Now we evaluate the distance between the average coherence and its maximum by explicit examples. Here we consider the $l_1$ norm of coherence $C_{l_1}$ and the quantum states in the form of $\rho(x)=1/2|0\rangle\langle 0| + x|0\rangle\langle 1| +x|1\rangle\langle 0| + 1/2|1\rangle\langle 1|$,
with real variable $0\leq x\leq 1/2$.
The eigenvectors of $\rho(x)$ are $|+\rangle=\frac{1}{\sqrt{2}}(|0\rangle + |1\rangle)$ and $|-\rangle=\frac{1}{\sqrt{2}}(|0\rangle - |1\rangle)$ which are independent of the variable $x$.
Suppose $\lambda_1(x)$ and $\lambda_2(x)$ are the eigenvalues of $\rho(x)$ corresponding to the eigenvectors $|+\rangle$ and $|-\rangle$,
so we have $\rho(x)=\lambda_1(x) |+\rangle \langle+| + \lambda_2(x) |-\rangle \langle-|$.
Let $\sqrt{p_1(x)} |\psi_1(x)\rangle=\cos \alpha \sqrt{\lambda_1(x)} |+\rangle + \sin \alpha \sqrt{\lambda_1(x)} |-\rangle $, $\sqrt{p_2(x)} |\psi_2(x)\rangle= -\sin \alpha \sqrt{\lambda_1(x)} |+\rangle + \cos \alpha \sqrt{\lambda_2(x)} |-\rangle $, with probability ${p_1(x)}= \cos^2 \alpha \lambda_1(x) + \sin^2 \alpha \lambda_2(x)$  and probability ${p_2(x)}= \sin^2 \alpha \lambda_1(x) + \cos^2 \alpha \lambda_2(x)$, then $\mathfrak{D}(x)=\{p_k(x), |\psi_k(x)\rangle\}_{k=1,2}$ is any pure state decomposition of $\rho(x)$ with two components. The average $l_1$ norm of coherence of $\rho(x)$ with respect to the pure state decomposition $\mathfrak{D}(x)$ is ${\bar C}_{l_1} (\mathfrak{D}(x))=\sum_k  p_k(x) C_{l_1}(|\psi_k(x) \rangle)$.
See FIG. 1 for the average coherence and MSD of $\rho(x)$ with respect to the pure state decomposition $\mathfrak{D}(x)$ with $\alpha$ in the interval $[0, \frac{\pi}{2}]$. Now we consider the interval $[0, \frac{\pi}{4}]$ for $\alpha$ due to the symmetry.
The maximum average coherence of $\rho(x)$ is 1, which is arrived at $\alpha=0$. The minimum average coherence of $\rho(x)$ is $2x$ for $\alpha\in[\frac{1}{2} \arccos{2x}, \frac{\pi}{4}]$.
The average coherence of $\rho(x_1)$ and $\rho(x_2)$ overlap on the interval $[0, \frac{1}{2} \arccos{2x_2}]$ for $0 \leq x_1\leq x_2 \leq \frac{\pi}{4}$.
The mean square deviation (MSD) of $\rho(x)$ is $MSD(\mathfrak{D}(x))=\sum_k p_k(x) (C_{l_1}(|\psi_k(x)\rangle)-{\bar C}_{l_1}(\mathfrak{D}(x)))^2$. The MSD reaches its maximum at $\alpha=\frac{1}{2} \arccos{2x}$ and minimum at $\alpha=0$ and $\alpha=\frac{\pi}{4}$.
The average coherence decreases with $\alpha$ in the interval $[0, \frac{\pi}{4}]$, and the MSD which is the distance between the pure state coherence and its average coherence increases with $\alpha$ in the interval $[0, \frac{1}{2} \arccos{2x}]$ and decreases with $\alpha$ in the interval $[\frac{1}{2} \arccos{2x}, \frac{\pi}{4}]$.

\begin{figure}[htbp]\label{fig var}
\includegraphics{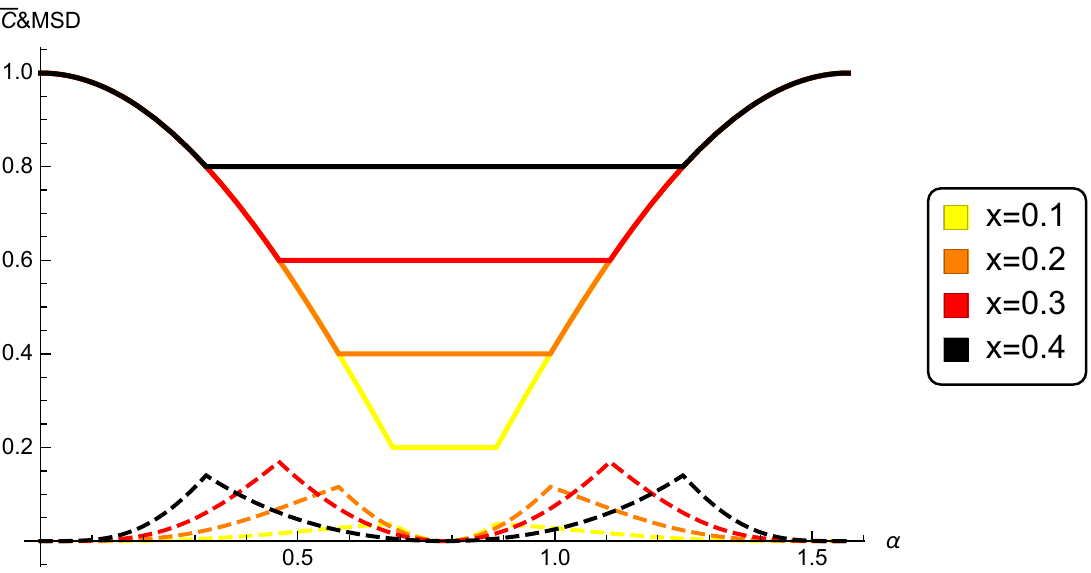}
\caption{(Color Online) The average coherence (solid lines) and MSD (dashed lines) with respect to the pure state decomposition $\mathfrak{D}(x)$ for $\rho(x)$ with $x=0.1,\ 0.2,\ 0.3,\ 0.4$. The plots are symmetric at $\alpha=\frac{\pi}{4}$ in the interval $[0, \frac{\pi}{2}]$.
In the interval $[0, \frac{\pi}{4}]$, the maximum average coherence of $\rho(x)$ is 1, which is arrived at $\alpha=0$. The minimum average coherence of $\rho(x)$ is $2x$ for $\alpha\in[\frac{1}{2} \arccos{2x}, \frac{\pi}{4}]$.
The average coherence of $\rho(x_1)$ and $\rho(x_2)$ overlap on the interval $[0, \frac{1}{2} \arccos{2x_2}]$ for $0 \leq x_1\leq x_2 \leq \frac{\pi}{4}$.
The MSD reaches its maximum at $\alpha=\frac{1}{2} \arccos{2x}$ and minimum at $\alpha=0$ and $\alpha=\frac{\pi}{4}$.}
\end{figure}

\section{The geometrical description of the pure state decompositions in qubit system}

In this section, we study the geometrical description of the pure state decompositions based on the average quantum coherence in qubit system.
On the Bloch sphere, any pure state $|\psi\rangle=\cos\frac{\theta}{2}|0\rangle+e^{{\rm i} \phi}\sin\frac{\theta}{2}|1\rangle$ is associated with a unit Bloch vector $\vec{n}(|\psi\rangle)=(\sin\theta\cos\phi, \sin\theta\sin\phi,\cos\theta)$, $0\leq \theta \leq \pi$, $0\leq \phi \leq 2\pi$. $\theta$ is the angle between the Bloch vector $\vec{n}(|\psi\rangle)$ and the axis $\langle \sigma_3 \rangle$. $\phi$ is the angle between the Bloch vector $\vec{n}(|\psi\rangle)$ and the axis $\langle \sigma_1 \rangle$. The coherence of $|\psi\rangle$ is  $C(|\psi\rangle)=f(|\cos\frac{\theta}{2}|^2, |\sin\frac{\theta}{2}|^2)=f(|\cos\frac{\theta}{2}|^2, 1-|\cos\frac{\theta}{2}|^2)$ by Eq. (\ref{cf}) which only depends on the angle $\theta$. Any mixed
state $\rho=\frac{1}{2}(I+\vec{n}\cdot \vec{\sigma})$ is also associated with a Bloch vector $\vec{n}=n(\sin\theta\cos\phi, \sin\theta\sin\phi,\cos\theta)$ with the length of the Bloch vector $n\leq 1$, where $\vec{\sigma}=(\sigma_1,\sigma_2, \sigma_3)$ with the Pauli matrices $\sigma_1=\left(\begin{array}{cccccccc}
0 & 1 \\
1 & 0
\end{array}\right)$, $\sigma_2=\left(\begin{array}{cccccccc}
0 & -i \\
i & 0
\end{array}\right)$, $\sigma_3=\left(\begin{array}{cccccccc}
1 & 0 \\
0 & -1
\end{array}\right)$.
Next we study and compare the average coherence with respect to four different pure state decompositions $\mathfrak{D}^{(M)}$, $\mathfrak{D}^{(s)}$, $\mathfrak{D}^{(m_1)}$, and $\mathfrak{D}^{(m_2)}$ of the mixed state $\rho$.
Without loss of generality, we assume $0\leq \theta \leq \pi/2$ which means the Bloch vector $\vec{n}$ is in the first octant of the Bloch sphere.

\begin{figure}[htbp]
\centering

\subfigure[The pure state decomposition $\mathfrak{D}^{(M)}$.]
{
\begin{minipage}[t]{0.45\textwidth}
\includegraphics[width=0.9\textwidth,height=0.8\textwidth]{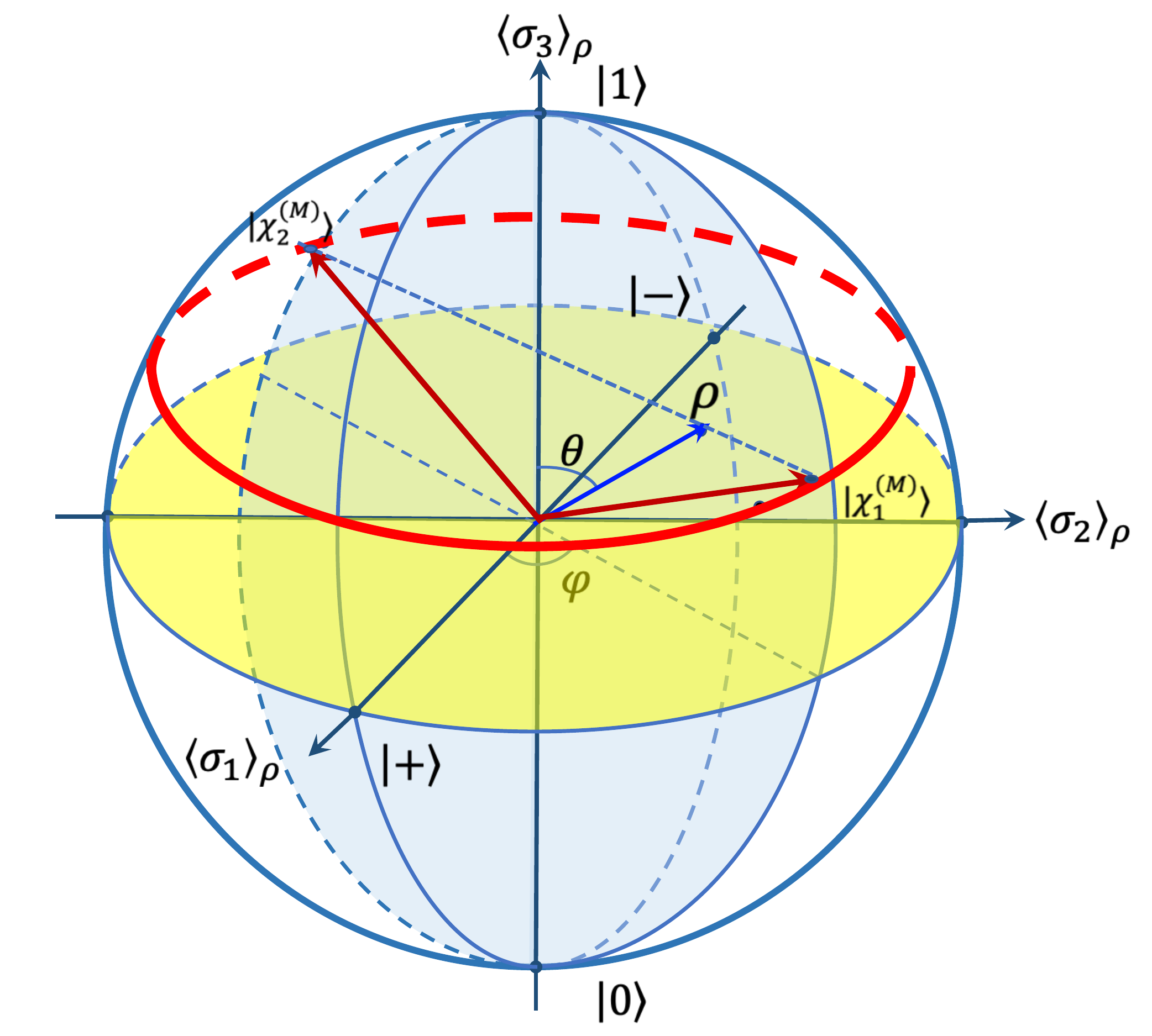}
\end{minipage}
}
\quad
\subfigure[The spectral decomposition $\mathfrak{D}^{(s)}$.]
{
\begin{minipage}[t]{0.45\textwidth}
\includegraphics[width=0.9\textwidth,height=0.8\textwidth]{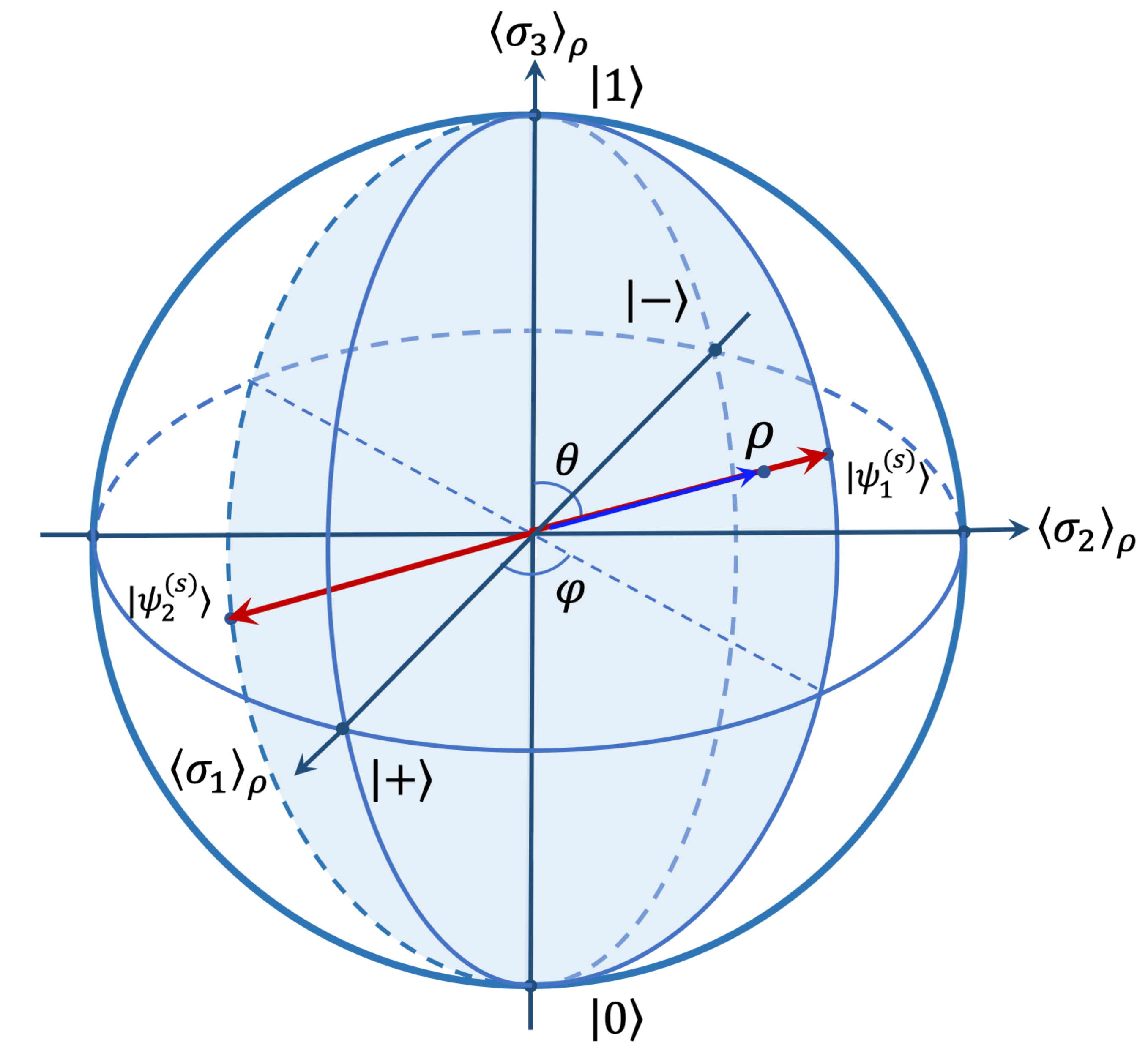}
\end{minipage}
}

\subfigure[The pure state decomposition $\mathfrak{D}^{(m_1)}$.]
{
\begin{minipage}[t]{0.45\textwidth}
\centering
\includegraphics[width=0.9\textwidth,height=0.8\textwidth]{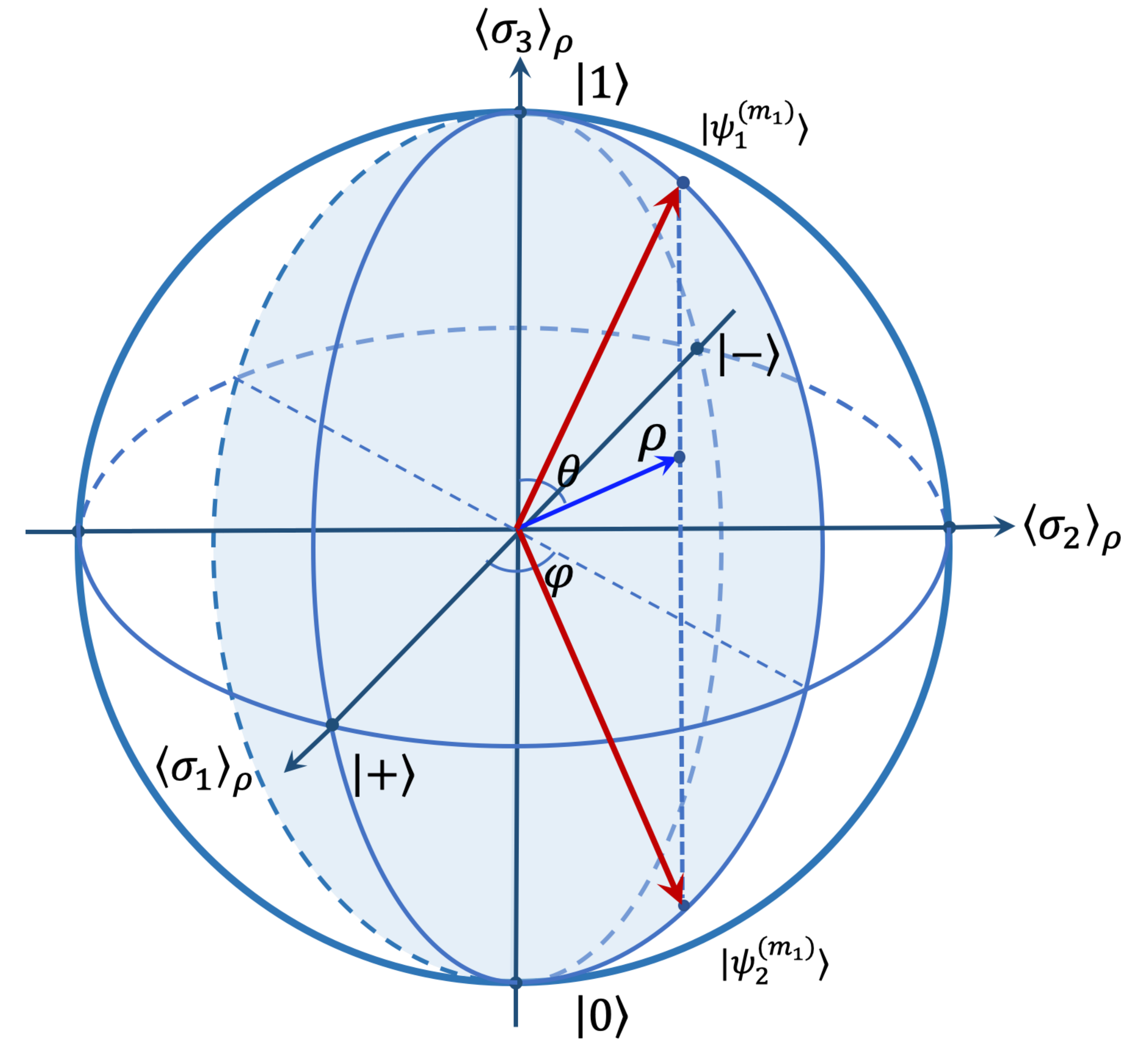}
\end{minipage}
}
\quad
\subfigure[The pure state decomposition $\mathfrak{D}^{(m_2)}$.]
{
\begin{minipage}[t]{0.45\textwidth}
\centering
\includegraphics[width=0.9\textwidth,,height=0.8\textwidth]{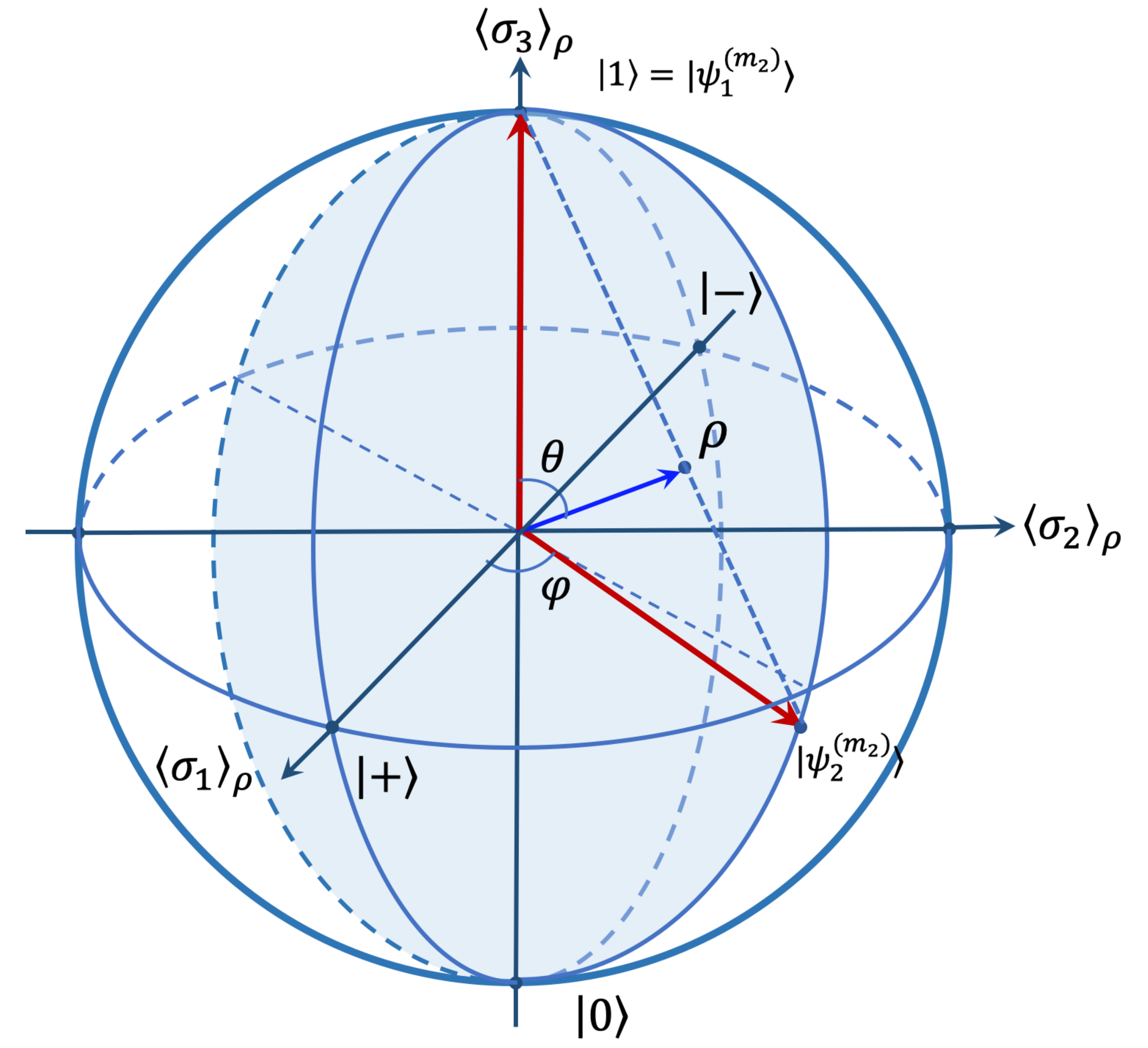}
\end{minipage}
}
\centering
\caption{The four pure state decompositions $\mathfrak{D}^{(M)}$, $\mathfrak{D}^{(s)}$, $\mathfrak{D}^{(m_1)}$, $\mathfrak{D}^{(m_2)}$ in the Bloch sphere. The central (yellow) disc in (a) is the set of mixed states with spectral decompositions reaching the maximal average coherence.}
\end{figure}

(1) {\it The pure state decomposition $\mathfrak{D}^{(M)}=\{p_k^{(M)}, |\chi_k^{(M)}\rangle\}_{k=1}^2$ reaching the maximal average coherence.} Here $|\chi_1^{(M)}\rangle=\cos\frac{\theta_o}{2}|0\rangle+e^{i\phi}\sin\frac{\theta_o}{2}|1\rangle$ with probability $p_1^{(M)}=\frac{1}{2}(1+\frac{n\sin\theta}{\sin\theta_o})$, $|\chi_2^{(M)}\rangle=\cos\frac{\theta_o}{2}|0\rangle-e^{i\phi}\sin\frac{\theta_o}{2}|1\rangle$ with probability $p_2^{(M)}=\frac{1}{2}(1-\frac{n\sin\theta}{\sin\theta_o})$, of which the Bloch vectors are $\vec{n}(|\chi_1^{(M)}\rangle)=(\sin\theta_o\cos\phi, \sin\theta_o\sin\phi,\cos\theta_o)$, $\vec{n}(|\chi_2^{(M)}\rangle)=(\sin\theta_o\cos(\pi+\phi), \sin\theta_o\sin(\pi+\phi),\cos\theta_o)$ with $\theta_o=\arccos{n \cos\theta}$ (See (a) in FIG. 2).
For this pure state decomposition, the average coherence is
\begin{equation}
{\bar C}(\mathfrak{D}^{(M)})=f(|\cos\frac{\theta_o}{2}|^2, |\sin\frac{\theta_o}{2}|^2).
\end{equation}
This average coherence is at its maximum because this pure state decomposition satisfies the sufficient condition in Theorem \ref{th upper bound}.
The pure state decompositions in the circle $\theta=\theta_o$ (The bold circle in (a) in FIG. 2) on the surface of the Bloch sphere are all optimal and are the only optimal pure state decompositions for strictly concave function $f$. In the Bloch sphere, we can see the maximal average coherence of $\rho$ is only determined by the length $n$ of the Bloch vector $\vec{n}$ and the angle $\theta_o$ with the axis $\langle \sigma_3\rangle$.

(2) {\it The spectral decomposition $\mathfrak{D}^{(s)}=\{p_k^{(s)}, |\psi_k^{(s)}\rangle\}_{k=1}^2$}. Here $|\psi_1^{(s)}\rangle=\cos\frac{\theta}{2}|0\rangle+e^{i\phi}\sin\frac{\theta}{2}|1\rangle$ with probability $p_1^{(s)}=\frac{1+n}{2}$, $|\psi_2^{(s)}\rangle=\sin\frac{\theta}{2}|0\rangle-e^{i\phi}\cos\frac{\theta}{2}|1\rangle$ with probability $p_2^{(s)}=\frac{1-n}{2}$, of which the Bloch vectors are $\vec{n}(|\psi_1^{(s)}\rangle)=(\sin\theta\cos\phi, \sin\theta\sin\phi,\cos\theta)$ and $\vec{n}(|\psi_2^{(s)}\rangle)=(\sin(\pi-\theta)\cos(\pi+\phi), \sin(\pi-\theta)\sin(\pi+\phi),\cos(\pi-\theta))=-\vec{n}(|\psi_1^{(s)}\rangle)$ (See (b) in FIG. 2).
For the spectral decomposition the average coherence is
\begin{eqnarray}
{\bar C}(\mathfrak{D}^{(s)})&=&p_1^{(s)} f(|\cos\frac{\theta}{2}|^2, |\sin\frac{\theta}{2}|^2) + p_2^{(s)} f(|\sin\frac{\theta}{2}|^2, |\cos\frac{\theta}{2}|^2)\\\nonumber
&=&f(|\cos\frac{\theta}{2}|^2, |\sin\frac{\theta}{2}|^2).
\end{eqnarray}

(3) {\it The pure state decomposition $\mathfrak{D}^{(m_1)}=\{p_k^{(m_1)}, |\psi_k^{(m_1)}\rangle\}_{k=1}^2$.} Here  $|\psi_1^{(m_1)}\rangle=\cos\frac{\theta_3}{2}|0\rangle+e^{i\phi}\sin\frac{\theta_3}{2}|1\rangle$ with probability $p_1^{(m_1)}=(-1 + n^2 \sin^2{\theta} - n \cos{\theta} \sqrt{1 - n^2 \sin^2{\theta}})/(-2 + 2 n^2 \sin^2{\theta})$, $|\psi_2^{(m_1)}\rangle=\sin\frac{\theta_3}{2}|0\rangle+e^{i\phi}\cos\frac{\theta_3}{2}|1\rangle$ with probability $p_2^{(m_1)}=(-1 + n^2 \sin^2{\theta} + n \cos{\theta} \sqrt{1 - n^2 \sin^2{\theta}})/(-2 + 2 n^2 \sin^2{\theta})$, of which the Bloch vectors are $\vec{n}(|\psi_1^{(m_1)}\rangle)=(\sin\theta_3\cos\phi, \sin\theta_3\sin\phi,\cos\theta_3)$, $\vec{n}(|\psi_2^{(m_1)}\rangle)=(\sin(\pi-\theta_3)\cos\phi, \sin(\pi-\theta_3)\sin\phi,\cos(\pi-\theta_3))$ with $\theta_3=\arccos{\sqrt{1 - n^2 \sin^2{\theta}}}$ (See (c) in FIG. 2).
For this pure state decomposition the average coherence is
\begin{equation}
{\bar C}(\mathfrak{D}^{(m_1)})=f(|\cos\frac{\theta_3}{2}|^2, |\sin\frac{\theta_3}{2}|^2).
\end{equation}

(4) {\it The pure state decomposition $\mathfrak{D}^{(m_2)}=\{p_k^{(m_2)}, |\psi_k^{(m_2)}\rangle\}_{k=1}^2$}. Here $|\psi_1^{(m_2)}\rangle=|1\rangle$ with probability $p_1^{(m_2)}=(1- n^2)/(2 (1 + n\cos\theta))$, $|\psi_2^{(m_2)}\rangle=\cos\frac{\theta_4}{2}|0\rangle+e^{i\phi}\sin\frac{\theta_4}{2}|1\rangle$ with probability $p_2^{(m_2)}=(1 + n^2 + 2n\cos\theta)/(2 (1 + n\cos\theta))$. The Bloch vectors $\vec{n}(|\psi_1^{(m_2)}\rangle)=(0,0,-1)$, $\vec{n}(|\psi_2^{(m_2)}\rangle)=(\sin\theta_4\cos\phi, \sin\theta_4\sin\phi,\cos\theta_4)$ with $\theta_4=\arccos[(1 + 2n\cos\theta + n^2 \cos{2 \theta})/(1 + n^2 + 2n\cos\theta)]$ (See (d) in FIG. 2).
For this pure state decomposition the average coherence is
\begin{equation}
{\bar C}(\mathfrak{D}^{(m_2)})=p_2^{(m_2)} f(|\cos\frac{\theta_4}{2}|^2, |\sin\frac{\theta_4}{2}|^2).
\end{equation}

Now we compare the average coherence of these four pure state decompositions. Here we employ the concepts of majorization relation and Schur concavity. Let $\textbf{x}=(x_1,\ x_2, \cdots,\ x_n)$ and $\textbf{y}=(y_1,\ y_2, \cdots,\ y_n)$ be two vectors in the probability simplex $\Omega$ with coordinates in the decreasing order. $\textbf{x}$ is majorized by $\textbf{y}$ denoted as $\textbf{x} \prec \textbf{y}$, if $\sum_{i=1}^k x_k \leq \sum_{i=1}^k y_k$ for $1 \leq k \leq n$. For any real valued function $f$, it is called Schur concave if $\textbf{x} \prec \textbf{y}\Rightarrow f(\textbf{x}) \geq f(\textbf{y})$. Factually every real symmetric concave function is Schur concave \cite{R. Bhatia}.

By the first two pure state decompositions we can see
${\bar C}(\mathfrak{D}^{(s)})\leq {\bar C}(\mathfrak{D}^{(M)})$.
The inequality becomes equality if and only if $\theta=\frac{\pi}{2}$. So
the spectral decomposition reaches the maximal average coherence if and only if the mixed state $\rho$ lies in the central disc of Bloch sphere (See (a) in FIG. 2).

On one hand, since $|\cos\frac{\theta_3}{2}|^2>|\cos\frac{\theta}{2}|^2$, we get $(|\cos\frac{\theta}{2}|^2, |\sin\frac{\theta}{2}|^2)\prec (|\cos\frac{\theta_3}{2}|^2, |\sin\frac{\theta_3}{2}|^2)$, which means
$f(|\cos\frac{\theta_3}{2}|^2, |\sin\frac{\theta_3}{2}|^2)\leq f(|\cos\frac{\theta}{2}|^2, |\sin\frac{\theta}{2}|^2)$ because $f$ is Schur concave.
Therefore the order relations among the average coherence for the pure state decompositions $\mathfrak{D}^{(m_1)}$, $\mathfrak{D}^{(s)}$ and $\mathfrak{D}^{(M)}$ is
\begin{equation}\label{eq order 2}
{\bar C}(\mathfrak{D}^{(m_1)}) \leq {\bar C}(\mathfrak{D}^{(s)})\leq {\bar C}(\mathfrak{D}^{(M)}).
\end{equation}

On the other hand, since $|\cos\frac{\theta_4}{2}|^2>|\cos\frac{\theta}{2}|^2$, we get $(|\cos\frac{\theta}{2}|^2, |\sin\frac{\theta}{2}|^2)\prec (|\cos\frac{\theta_4}{2}|^2, |\sin\frac{\theta_4}{2}|^2)$ and
$f(|\cos\frac{\theta_4}{2}|^2, |\sin\frac{\theta_4}{2}|^2)\leq f(|\cos\frac{\theta}{2}|^2, |\sin\frac{\theta}{2}|^2)$.
Therefore the order relations among the average coherence for the pure state decompositions $\mathfrak{D}^{(m_2)}$, $\mathfrak{D}^{(s)}$ and $\mathfrak{D}^{(M)}$ is
\begin{equation}\label{eq order 1}
{\bar C}(\mathfrak{D}^{(m_2)}) < {\bar C}(\mathfrak{D}^{(s)})\leq {\bar C}(\mathfrak{D}^{(M)}).
\end{equation}

The order relations (\ref{eq order 2}) and (\ref{eq order 1}) are invariant for all coherence measures. This result is peculiar because the order relation of the average coherence with respect to different pure state decompositions probably change for different coherence measures. Typically for any mixed state $\rho$, its average coherence with respect to pure state decomposition $\mathfrak{D}=\{p_k, |\psi_k\rangle\}_{k=1}^d$ with $|\psi_k\rangle=\cos\frac{\theta_k}{2}|0\rangle+e^{{\rm i} \phi_k}\sin\frac{\theta_k}{2}|1\rangle$ is ${\bar C}(\mathfrak{D})=\sum_{k=1}^d p_k f(|\cos\frac{\theta_k}{2}|^2, 1-|\cos\frac{\theta_k}{2}|^2)$, which a multi-variable function. So it is not easy to compare two average coherence, not to mention the invariance of their order relations.
The invariance of the order relations (\ref{eq order 2}) and (\ref{eq order 1}) is attributed to the symmetry of these pure state decompositions. Because of the symmetry of the distribution of Bloch vectors in the sphere, the average coherence of $\mathfrak{D}^{(M)}$, $\mathfrak{D}^{(s)}$, $\mathfrak{D}^{(m_1)}$ turn to a single variable function. Because of the incoherent component in the pure state decomposition $\mathfrak{D}^{(m_2)}$, the average coherence ${\bar C} (\mathfrak{D}^{(m_2)})$ is bounded from above by the same single variable function. So this makes the comparison of these four average coherence possible. Finally thanks to the concavity of function $f$, we find the order relations (\ref{eq order 2}) and (\ref{eq order 1}) are invariant for all coherence measures.

For any given mixed state $\rho$, the coherence is not only embodied in the off diagonal entries of its density matrix, but also hides in its pure state decompositions. Intuitively, there may be order relations among the average coherence of the pure state decompositions for one coherence measure, but there is the possibility that is probably be different for another coherence measure. The invariance of the order relations of the average coherence as in (\ref{eq order 2}) and (\ref{eq order 1}) for all coherence measures is surprising. As an application, it is helpful to the assisted coherence distillation, in which one needs to know the pure state decomposition with average coherence as much as possible in order to enhance the coherence assisted by another party \cite{E. Chitambar,zhao l1max}. The invariance of the order relations among the average coherence of the pure state decompositions provides the possible pure state decompositions which is independent of explicit coherence measures. This universality is one of the intrinsic features of coherence.

For the above four pure state decompositions $\mathfrak{D}^{(M)}$, $\mathfrak{D}^{(s)}$, $\mathfrak{D}^{(m_1)}$ and $\mathfrak{D}^{(m_2)}$, and for the coherence measures such as
the relative entropy of coherence $C_r(\rho)$ or intrinsic randomness of coherence $R(\rho)$, the $l_1$ norm of coherence $C_{l_1}$ or coherence concurrence $C_{con}$ and the fidelity based measure of coherence $C_F(\rho)$, the order relation of the average coherence is
\begin{equation}
{\bar C}(\mathfrak{D}^{(m_1)})\leq {\bar C}(\mathfrak{D}^{(m_2)}) < {\bar C}(\mathfrak{D}^{(s)})\leq {\bar C}(\mathfrak{D}^{(M)}).
\end{equation}
The explicit average coherence are summarized in Table I.
Furthermore, the pure state decomposition $\mathfrak{D}^{(M)}$ is optimal for reaching the maximal average coherence and $\mathfrak{D}^{(m_1)}$ is optimal for reaching the minimal average coherence for these coherence measures. By the fact that the
pure state decomposition $\mathfrak{D}^{(M)}$ is optimal for the maximal average coherence for all coherence measures, one may conjecture the pure state decomposition $\mathfrak{D}^{(m_1)}$ is optimal for the minimal average coherence for all coherence measures. However, the answer is negative. Let us consider the function $f$ defined on the two dimensional probability simplex as $f(|\psi_0|^2, |\psi_1|^2)= 1 - (2 (|\psi_0|^2 - 1/2))^{2k}$ for $|\psi_0|^2\geq 1/2$ and $f(|\psi_0|^2, |\psi_1|^2)=f(|\psi_1|^2, |\psi_0|^2)$ for $|\psi_0|^2< 1/2$. One can verify that $f$ is a real symmetric concave function. So the corresponding $C_f$ as defined in Eq. (\ref{cf}) is a coherence measure for pure states.
When $k$ is large enough, the average coherence ${\bar C}_f(\mathfrak{D}^{(m_1)})$ is strictly larger than the average coherence ${\bar C}_f(\mathfrak{D}^{(m_2)})$. Therefore the pure state decomposition which achieves the minimal average coherence  for any fixed density matrix is not unique for all coherence measures.

\begin{table}[!h]
\newcommand{\tabincell}[2]{\begin{tabular}{@{}#1@{}}#2\end{tabular}}
\begin{center}
\def\temptablewidth{1\textwidth}
{\rule{\temptablewidth}{2pt}}
\begin{tabular*}
{\temptablewidth}{@{\extracolsep{\fill}}c|c|c|c}
& $C_r$ ($C_R$) & $C_{l_1}$ ($C_{con}$) &  $C_F$   \\   \hline
$f$ & $H (|\psi_0|^2, |\psi_1|^2)$ & $\sum_{i \neq j} |\psi_i \psi_j|$ & $\sqrt{1-\max \{|\psi_0|^2, |\psi_1|^2\}}$  \\   \hline
$\mathfrak{D}^{(M)}$ & $H(|\cos \frac{\theta_o}{2}|^2, |\sin \frac{\theta_o}{2}|^2)$  &  $|\cos \frac{\theta_o}{2} \sin \frac{\theta_o}{2}|$ &  $\sqrt{ 1-\max\{|\cos \frac{\theta_o}{2}|^2, |\sin \frac{\theta_o}{2}|^2\}}$  \\  \hline
$\mathfrak{D}^{(s)}$ & $H(|\cos \frac{\theta}{2}|^2, |\sin \frac{\theta}{2}|^2)$  &  $|\cos \frac{\theta}{2} \sin \frac{\theta}{2}|$ & $\sqrt{1-\max\{|\cos \frac{\theta}{2}|^2, |\sin \frac{\theta}{2}|^2\}}$  \\  \hline
$\mathfrak{D}^{(m_1)}$ & $H(|\cos \frac{\theta_3}{2}|^2, |\sin \frac{\theta_3}{2}|^2)$  &  $|\cos \frac{\theta_3}{2} \sin \frac{\theta_3}{2}|$ & $\sqrt{ 1-\max\{|\cos \frac{\theta_3}{2}|^2, |\sin \frac{\theta_3}{2}|^2\}}$  \\   \hline
$\mathfrak{D}^{(m_2)}$ & $ p_2^{(m_2)} H(|\cos \frac{\theta_4}{2}|^2, |\sin \frac{\theta_4}{2}|^2)$ &  $ p_2^{(m_2)} |\cos \frac{\theta_4}{2} \sin \frac{\theta_4}{2}|$ &  $ p_2^{(m_2)} \sqrt{1-\max\{|\cos \frac{\theta_4}{2}|^2, |\sin \frac{\theta_4}{2}|^2\}}$
\end{tabular*}
{\rule{\temptablewidth}{2pt}}
 \caption{The average coherence of four pure state decompositions $\mathfrak{D}^{(M)}$, $\mathfrak{D}^{(s)}$, $\mathfrak{D}^{(m_1)}$ and $\mathfrak{D}^{(m_2)}$ with respect to the relative entropy of coherence $C_r$ or intrinsic randomness of coherence $C_R$, the $l_1$ norm of coherence $C_{l_1}$ or coherence concurrence $C_{con}$ and the fidelity based measure of coherence $C_F$.
  }
 \end{center}
 \end{table}

\section{Conclusions}

In conclusion, we have studied the average quantum coherence over the pure state decompositions of a quantum state. We obtain an upper bound of the average quantum coherence solely as a function of the diagonal entries of the quantum state.
Hence, one can estimate the average quantum coherence by measuring only the diagonal entries, instead of getting the full ensemble information.
Two sufficient conditions for the saturation of the upper bound is shown which always hold for two and three dimensional systems. The pure state decomposition reaching the maximal average coherence is independent of any explicit coherence measure.
Our second result exhibits an interesting geometrical description of the pure state decompositions of a qubit state in Bloch sphere. For any given qubit state, the optimal pure state decomposition reaching the maximal average quantum coherence, the spectral decomposition,  as well as other two pure state decompositions are shown in the Bloch sphere. The order relations among their average quantum coherence are invariant for all coherence measures. We hope these universal results  can strengthen the understanding of quantum coherence.

\bigskip
\noindent{\bf Acknowledgments}\, Ming-Jing Zhao was supported by the China Scholarship
Council (Grant No. 201808110022), Qin Xin Talents Cultivation Program, Beijing Information Science and Technology University. Teng Ma is supported by the NSF of China (Grant No.
11905100).  Rajesh Pereira is supposrted by NSERC discovery grant no 400550.

\end{document}